\newcommand{\C }{\mathbf{C}}
\newcommand{\R }{\mathbf{R}}
\begin{document}

\title{Finding minimum Tucker submatrices}
\author{J\'an Ma\v nuch \inst{1,2} \and Arash Rafiey\inst{2}}

\institute{Department of Computer Science, UBC, Vancouver, BC, Canada
  \and Department of Mathematics, Simon Fraser University,
  Burnaby, BC, Canada\\
\email{jmanuch@cs.ubc.ca,arashr@sfu.ca}
}

\maketitle

\begin{abstract}
A binary matrix has the Consecutive Ones Property (C1P) if its columns can be ordered in such a way that all 1s on each row are consecutive.
These matrices are used for DNA physical mapping and ancestral genome reconstruction in computational biology on the
other hand they represents a class of convex bipartite graphs and are of interest of algorithm graph theory researchers.
Tucker gave a forbidden submartices characterization of matrices that have C1P property in 1972. Booth and Lucker (1976) gave a
first linear time recognition algorithm for matrices with C1P property and then in 2002, Habib, et al. gave a simpler linear time recognition algorithm.
There has been substantial amount of works on efficiently finding minimum size forbidden submatrix.
Our algorithm is at least $n$ times faster than the existing algorithm where $n$ is the number of columns of the input matrix.

\end{abstract}

\section{Introduction and Preliminaries}
\label{sec:preliminaries}

A binary matrix has the Consecutive Ones Property (C1P) if its columns
can be ordered in such a way that all ones in each row are
consecutive. Deciding if a matrix has the C1P can be done in
linear-time and
space~\cite{BoothLueker1976,habib-lex,hsu-simple,mcconnell-certifying,meidanis-on}.
The problem of deciding if a matrix has the C1P has been considered in
genomic, for problems such as physical
mapping~\cite{alizadeh-physical,lu-test} or ancestral
genome reconstruction~\cite{adam-modelfree,chauve-methodological,ma-reconstructing}.

Let $M$ be a $m\times n$ binary matrix. Let $\R =
\{r_{i}:\ i = 1,\dots,m\}$ be the set of its rows and $\C = \{c_{j}:\
j = 1,\dots,n\}$ the set of its columns. Its \emph{corresponding
  bipartite graph} $G(M) = (V_{M},E_{M})$ is defined as follows:
$V_{M} = \R\cup \C$, and two vertices $r_{i}\in R$ and $c_{j}\in \C$
are connected by an edge if and only if $M[i,j] = 1$. We will refer to
the partition $\R$ and $\C$ of $G(M)$ as black and white vertices,
respectively. The set of neighbors of a vertex $x$ will be denoted by
$N(x)$. The $i$-the neighborhood of $x$, denoted by $N_{i}(x)$, is the
set of vertices distance $i$ from $x$. All these sets, for a fixed
$x$, can be computed in time $O(e)$ using the bread-first search
algorithm. A subgraph of $G(M)$ induces by vertices
$x_{1},\dots,x_{k}$ will be denoted by $G(M)[x_{1},\dots,x_{k}]$. A
set of edges of bipartite graph is called \emph{induced matching} if
the set of endpoints of these edges induces this matching in the
graph. For example, two edges $\{u,v\}$ and $\{u',v'\}$, where $u,u'$
are in the same partition form an induced matching if $\{u,v'\}$ and
$\{u',v\}$ are not edges of the graph.

An \emph{asteroidal triple} is an independent set of three vertices such that
each pair is connected by a path that avoids the neighborhood of the
third vertex. A \emph{white asteroidal triple} is an asteroidal triple
on white (column) vertices.

The following result of Tucker links the C1P of matrices to asteroidal
triples of their bipartite graphs.

\begin{theorem}[\cite{Tucker1972}]
  A binary matrix has the C1P if and only if its corresponding
  bipartite graph does not contain any white asteroidal
  triples.
\end{theorem}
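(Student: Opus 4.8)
The plan is to establish the two implications separately: the direction ``C1P $\Rightarrow$ no white asteroidal triple'' is a short crossing argument, whereas the converse carries essentially all of the difficulty. For the first direction I would assume $M$ has the C1P, fix a column ordering in which the ones of every row are consecutive, and derive a contradiction from a supposed white asteroidal triple $\{c_a,c_b,c_c\}$. In such an ordering each row $r_i$ occupies an interval $I_i$ of consecutive positions, with $r_i\in N(c_j)$ exactly when the position of $c_j$ lies in $I_i$. Relabel so that $c_a,c_b,c_c$ appear left to right. Bipartiteness makes the independence of the triple automatic, so the content is the $c_a$--$c_c$ path $P$ that avoids $N(c_b)$. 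The vertex $c_b$ cannot lie on $P$ (a path-neighbour of $c_b$ would be a row in the forbidden set $N(c_b)$), and column positions are distinct, so the column vertices along $P$ start at a position below that of $c_b$ and finish above it. Consecutive column vertices of $P$ are separated by a single row vertex; hence some such row joins a column left of $c_b$ to a column right of $c_b$, and as its ones are consecutive its interval must contain the position of $c_b$. That row therefore lies in $N(c_b)$, contradicting the choice of $P$.

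For the converse I would work contrapositively and show that if $M$ fails the C1P then $G(M)$ contains a white asteroidal triple. First I would reduce to the case that $G(M)$ is connected: columns of distinct components can be placed in disjoint blocks, so $M$ has the C1P iff the submatrix spanned by each component does, and any asteroidal triple inside one component serves for the whole graph. The target is then to manufacture, from the bare failure of a consistent column order, three columns that are pairwise linked by paths each dodging the third's neighbourhood. The guiding principle is the dual of the crossing argument above: a linear order fails to exist precisely when some three columns admit no ``betweenness'' verdict, i.e.\ for each pair there is a path avoiding the neighbourhood of the remaining column, which is exactly the asteroidal-triple condition.

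The hard part will be locating those three columns. I would pass to a minimal non-C1P induced submatrix, one all of whose proper row- and column-deletions are C1P, and argue that its minimality forces the obstruction to take one of a short list of shapes; in each shape three distinguished columns are then checked directly to form a white asteroidal triple. Carrying this out is exactly Tucker's structural classification of the minimal forbidden configurations $M_{I_k},M_{II_k},M_{III_k},M_{IV},M_V$, and proving that every non-C1P matrix must contain one of them is the delicate case analysis where I expect essentially all of the effort to reside.
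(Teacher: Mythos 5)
Your forward direction is sound and complete: independence of the three column vertices is indeed automatic in the bipartite graph, and the crossing argument --- since $c_b$ cannot lie on the $c_a$--$c_c$ path, some consecutive pair of column vertices on the path straddles the position of $c_b$, and the row between them has a ones-interval containing that position, hence lies in $N(c_b)$ --- is exactly the standard proof of that implication. (Note the paper itself states this theorem only as a citation to Tucker's 1972 paper, so the comparison is with Tucker's original argument rather than with an in-paper proof.)

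The converse, however, is where the theorem lives, and your proposal does not contain a proof of it; it contains a plan to defer to ``Tucker's structural classification,'' which you yourself flag as the place where essentially all the effort resides. Moreover, the deferral inverts the logical order of Tucker's development: Tucker proves the asteroidal-triple characterization \emph{first} (by induction on the number of columns, analyzing how connectivity of a non-C1P matrix constrains the admissible placements of a column relative to the rest, in the spirit of the Lekkerkerker--Boland characterization of interval graphs), and only afterwards derives the families $M_{\mathrm{I}_k},M_{\mathrm{II}_k},M_{\mathrm{III}_k},M_{\mathrm{IV}},M_{\mathrm{V}}$ by classifying the minimal configurations that realize a white asteroidal triple. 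If you instead establish the forbidden-submatrix classification first and then check that each minimal configuration contains a white asteroidal triple (the easy part), you need an \emph{independent} proof that every non-C1P matrix contains one of the five families --- and no such argument is sketched: your claim that ``minimality forces the obstruction to take one of a short list of shapes'' is precisely the assertion to be proved, and borrowing it from a source that derives it from the asteroidal-triple theorem would be circular. The guiding heuristic you offer --- that a linear order fails to exist precisely when three columns admit no betweenness verdict --- is likewise not an argument: extracting, from the bare nonexistence of a valid column ordering, three specific columns together with three pairwise neighbourhood-avoiding paths is exactly the content of the hard direction, and nothing in the proposal constructs those paths. Your reduction to connected components is fine but does not advance this; the gap is the entire second implication.
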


\begin{theorem}[\cite{Tucker1972}]
  A binary matrix has the C1P
  if and only
  if its corresponding bipartite graph does not contain any of the
  forbidden subgraphs in $T =
  \{G_{\mathrm{I}_{k}},G_{\mathrm{II}_{k}},G_{\mathrm{III}_{k}}:\ k\ge
  1\} \cup \{G_{\mathrm{IV}},G_{\mathrm{V}}\}$, depicted in
  Figure~\ref{fig:forbidden-subgraphs}. We will refer to these
  subgraphs as the type I, II, III, IV and V, respectively.
\end{theorem}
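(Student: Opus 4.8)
The plan is to prove both implications, leaning on the preceding theorem of Tucker that characterizes the C1P by the absence of white asteroidal triples; this reduces the statement to showing that a bipartite graph $G(M)$ contains a white asteroidal triple if and only if it contains one of the forbidden subgraphs of $T$ as an induced subgraph. I treat the two directions separately, the easy one by a heredity argument and the hard one by a structural case analysis of a minimal witness.

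For the easy direction (a forbidden subgraph forces failure of the C1P) I would first record that the C1P is hereditary: deleting a row or a column of $M$ corresponds to deleting a vertex of $G(M)$, and any column ordering witnessing the C1P of $M$ restricts to a valid ordering for the submatrix, so every submatrix of a C1P matrix is C1P. Since an induced subgraph of $G(M)$ on a subset of rows and columns is exactly $G(M')$ for the corresponding submatrix $M'$, it suffices to verify that no graph in $T$ is itself C1P. For the two sporadic types $G_{\mathrm{IV}}$ and $G_{\mathrm{V}}$ this is a finite check, and for the three infinite families $G_{\mathrm{I}_{k}}$, $G_{\mathrm{II}_{k}}$, $G_{\mathrm{III}_{k}}$ I would exhibit, for every $k\ge 1$, an explicit white asteroidal triple among their column vertices (three white vertices pairwise joined by the long segments of the pattern while each avoids the third vertex's neighborhood); by the preceding theorem these graphs are not C1P. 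Hence if $G(M)$ contains such a subgraph, the corresponding submatrix, and therefore $M$ itself, is not C1P.

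For the hard direction I would start from a white asteroidal triple $\{a,b,c\}$, guaranteed by the contrapositive of the preceding theorem, together with three connecting paths $P_{ab},P_{bc},P_{ac}$, where $P_{xy}$ joins $x$ and $y$ and avoids $N(z)$ for the remaining vertex $z$. I would then pass to a \emph{minimal} witness, choosing the triple and the three paths so as to minimize the total number of vertices used; minimality forces each path to be induced (a chord would allow a shortcut) and forces the pairwise intersections of the three paths to be as small as possible. The heart of the argument is a case analysis of this minimal configuration, organized by how the three paths meet (pairwise disjoint apart from shared endpoints, versus sharing an internal vertex) and by the parities of their lengths, which bipartiteness pins down since every path alternates between the two colour classes and has white endpoints. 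Tracking the edges that the avoidance conditions force to be absent (for instance $a$ is non-adjacent to every internal vertex of $P_{bc}$) together with those forced present, each case collapses to exactly one of the five Tucker types: the different arrangements of the long disjoint or overlapping paths yield the infinite families $\mathrm{I}$, $\mathrm{II}$, $\mathrm{III}$, while the short exceptional overlaps yield the sporadic $G_{\mathrm{IV}}$ and $G_{\mathrm{V}}$.

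I expect the main obstacle to be precisely this final case analysis, namely guaranteeing that the enumeration of minimal asteroidal configurations is genuinely exhaustive and that no obstruction beyond the five listed types can occur. The delicate points are controlling the intersections of the three connecting paths and verifying that minimality excludes chords that would shortcut a path into a strictly smaller forbidden pattern; these are exactly the places where a careless enumeration could either miss a type or produce a ``new'' configuration that in fact contains one of the five as an induced subgraph. Once each case is identified with a type, the induced subgraph it exhibits is the claimed Tucker submatrix, which completes the proof.
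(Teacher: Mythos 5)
There is nothing to compare against inside the paper: the statement is Tucker's forbidden-subgraph characterization, which the paper cites from \cite{Tucker1972} and does not prove. So your proposal must stand or fall on its own, measured against Tucker's original argument.

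Your easy direction is sound and complete in outline: heredity of the C1P under row/column deletion, the correspondence between induced subgraphs of $G(M)$ and submatrices of $M$, a finite check for $G_{\mathrm{IV}}$ and $G_{\mathrm{V}}$, and an explicit white asteroidal triple in each member of the three infinite families. That part would compile into a genuine proof with routine effort. The hard direction, however, contains a gap that you yourself name but do not close: the entire mathematical content of the theorem \emph{is} the exhaustive case analysis showing that a minimal asteroidal-triple witness induces one of exactly five types, and your proposal defers it (``I expect the main obstacle to be precisely this final case analysis''). A plan that says ``organize by how the paths meet and by parities, and each case collapses to one of the five types'' asserts the conclusion rather than deriving it; nothing in the proposal rules out a sixth configuration, and the intersections of the three paths are genuinely intricate --- note, as the paper's own Section~\ref{sec:detect-small-forb-all} records, that the relation between the total path length $\ell$ of a shortest-paths triple and the vertex count of the induced obstruction varies by type ($\ell$ for types I/II, $\ell-5$ for III, $\ell-8$ for IV, $\ell-1$ for V), which shows the paths overlap in structurally different ways across the types and that ``minimize total vertices'' versus ``minimize total path length'' are not interchangeable normalizations. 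It is also worth knowing that Tucker's actual proof does not proceed by a direct enumeration of minimal path configurations: he argues by induction on the number of columns over matrices that are minimal non-C1P (every proper submatrix has the C1P), analyzing the structure forced by minimality. Your reduction to asteroidal triples is a legitimate alternative entry point (it is how later algorithmic work, e.g.\ \cite{DomGuoNiedermeier2010}, exploits the theorem), but as written the proposal proves only the easy half.
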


\begin{figure}
  \centering
  \includegraphics[width = .7\linewidth ]{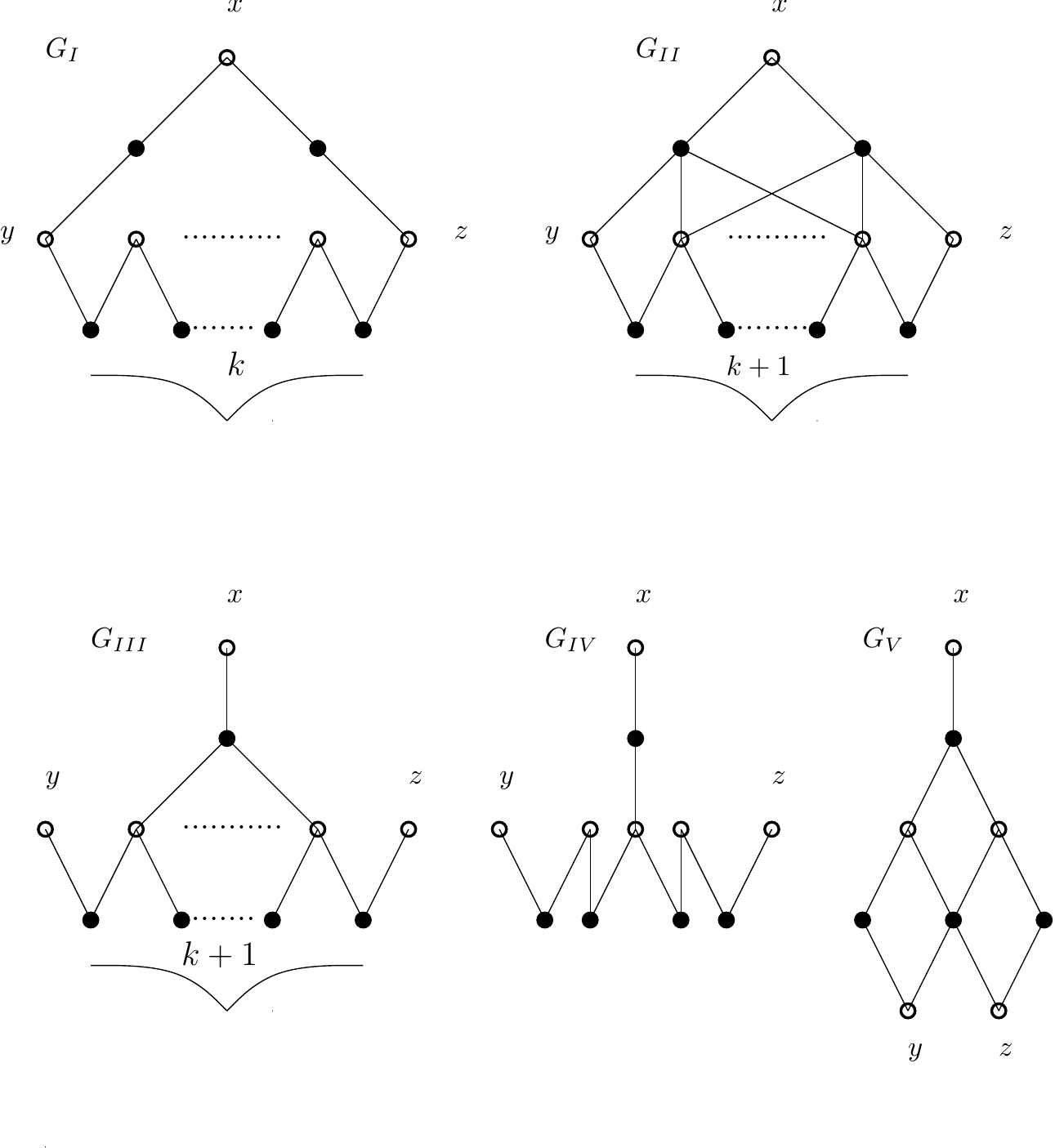}
  \caption{The set of Tucker's forbidden subgraphs.}
  \label{fig:forbidden-subgraphs}
\end{figure}

The author in \cite{DBLP:conf/wg/LindzeyM13} developed an algorithm for finding one of the obstructions
in linear time. However, their algorithm does not guarantee the minimum size obstruction.
The characterization can be used to determine whether a given binary
matrix has the C1P in time $O(\Delta mn^{2} + n^{3})$, where $\Delta $
is the maximum number of ones per row, i.e., the maximum degree of
black vertices in $G(M)$, as explained by the following result in
\cite{DomGuoNiedermeier2010}.

\begin{lemma}[\cite{DomGuoNiedermeier2010}]
  \label{l:asteroidal}
  A white asteroidal triple $u,v,w$ with the smallest sum of the three
  paths (avoiding the third neighborhood) can be computed in time
  $O(\Delta mn^{2} + n^{3})$.
\end{lemma}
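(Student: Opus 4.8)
The plan is to reduce the problem to a collection of breadth-first searches, one family for each possible ``third'' vertex. First I would observe that, since $u$, $v$, $w$ all lie on the white side of the bipartition, they are automatically pairwise non-adjacent; hence the independence requirement in the definition of an asteroidal triple is free, and the only thing to verify is the existence of the three connecting paths. Moreover, a path joining two white vertices while avoiding $N(w)$ can never pass through the white vertex $w$ itself, because every white vertex is entered only through one of its black neighbours. Thus ``the path avoids $N(w)$'' is exactly the same as ``the path lives in the graph $G_w := G(M) - N(w)$'' obtained by deleting the black neighbours of $w$, and the endpoints $u,v$ survive this deletion since they are white.

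For a fixed white vertex $w$, let $d_w(u,v)$ denote the length of a shortest path from $u$ to $v$ in $G_w$, with the convention $d_w(u,v)=+\infty$ when none exists. I would compute $d_w(u,v)$ for every pair of white vertices $u,v$ by marking $N(w)$ as forbidden in $O(\deg(w))$ time and then running a breadth-first search from each white source that never expands into a forbidden vertex. Since $G(M)$ has $e \le \Delta m$ edges (each of the $m$ rows contributes at most $\Delta$ ones) and $m+n$ vertices, one such search costs $O(\Delta m + n)$; running it from all $n$ white sources costs $O(\Delta mn + n^2)$, and repeating this for each of the $n$ choices of centre $w$ produces the whole table $\{d_w(u,v)\}$ in time $O(\Delta mn^2 + n^3)$ and $O(n^3)$ space.

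With the table in hand the objective becomes purely combinatorial: a triple $\{a,b,c\}$ of white vertices is a white asteroidal triple exactly when $d_a(b,c)$, $d_b(a,c)$ and $d_c(a,b)$ are all finite, and, because the three path constraints are mutually \emph{independent} (each shortest path may be chosen separately, with no disjointness requirement between them), the smallest attainable sum of its three avoiding paths equals $d_a(b,c)+d_b(a,c)+d_c(a,b)$. I would therefore enumerate all $\binom{n}{3}$ triples, discard any with an infinite term, and retain the one minimizing this sum; using $O(1)$ lookups into the stored table this scan costs $O(n^3)$, so the entire procedure runs in $O(\Delta mn^2 + n^3)$. The three actual paths realizing the optimal triple are recovered from the BFS predecessor pointers.

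The main obstacle I expect is bookkeeping rather than anything conceptual. The crux is to confirm that the $n$ distance matrices are produced without exceeding the stated bound: the per-centre work must stay at $O(\Delta mn + n^2)$, which forbids, for instance, rebuilding $G_w$ from scratch each time (the forbidden-marking trick is what avoids an extra $O(n)$ factor), and the triple scan must use precomputed $O(1)$ lookups rather than recomputing any distance. A secondary point to check carefully is the claimed equivalence between ``avoiding $N(w)$'' and ``lying in $G_w$'', namely that deleting $N(w)$ neither severs a legitimate path nor drops the endpoints, so that the BFS distances coincide exactly with the minimum lengths of the paths required by the asteroidal-triple definition.
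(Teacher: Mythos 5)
Your proposal is correct, and it is essentially the argument behind the cited result: the paper itself states Lemma~\ref{l:asteroidal} without proof, quoting it from Dom, Guo and Niedermeier \cite{DomGuoNiedermeier2010}, whose proof likewise runs, for each of the $n$ choices of the excluded white vertex $w$, a BFS from each of the $n$ white sources in the graph with $N(w)$ deleted (cost $O(\Delta m n^{2} + n^{3})$ for the distance tables) and then scans all $\binom{n}{3}$ triples using the fact that the three path constraints are independent, so the minimum sum is the termwise sum of shortest avoiding distances. Your two flagged subtleties (white vertices form an independent set, and avoiding $N(w)$ already forbids passing through $w$) are handled correctly, so there is no gap.
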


For practical purposes, there is a much faster algorithm that uses
PQ-trees for determining whether a binary matrix has the C1P,
cf. \cite{BoothLueker1976}. Tucker's interest was in finding the
smallest submatrix of a non-C1P binary matrix which makes this matrix
non-C1P. He further refined his asteroidal triple characterization
using a set of \emph{forbidden submatrices}. We will state this
results in terms of \emph{forbidden subgraphs}.


We will consider two problems: (1) detected a smallest forbidden
subgraph of each type (Section~\ref{sec:detect-small-forb-each}), and
(2) detecting a smallest forbidden subgraph of any type
(Section~\ref{sec:detect-small-forb-all}).

We use the followings to improve the complexity :
\begin{itemize}
\item In our computation we use degree of each vertex instead the
  maximum degree $\Delta$.
\item
  We compute some of the necessary sets in advance.
\item
  In our analysis we use the minimum obstruction assumption and
  explore the connection of vertices around a minimum obstruction with
  it.
\end{itemize}

\setlength{\tabcolsep}{5pt}
\begin{table}
  \centering
  \begin{tabular}{c | l l l}
    Subgraph type & \multicolumn{3}{c}{Time complexity} \\
    & Previous result & Our result (Exact) & Our result\\
    \hline
    I & $O(\Delta^{4}m^{3})$ \cite{BlinRizziVialette2012} &
    $O(\Delta e^{2}) = O(\Delta^{3}m^{2})$ &
    $O(n^{2}e)$ \cite{DomGuoNiedermeier2010}\\
    II & $O(\Delta^{4}m^{3}) = O(ne^{3})$ \cite{BlinRizziVialette2012} &
    --- &
    $O(n^{2}e)$ \cite{DomGuoNiedermeier2010} \\
    III & $O(\Delta^{2}m^{2}n^{2})$ \cite{BlinRizziVialette2012} &
    $O(e^{3}) = O (\Delta^{3}m^{3})$ &
    $O(ne^{2})$ \\
    IV & $O(\Delta^{3}m^{2}n^{3})$ \cite{DomGuoNiedermeier2010} &
    $O(m^{3}e) = O(\Delta m^{4})$ &
    $O(n^{3}e)$ \\
    V & $O(\Delta^{4}m^{2}n )$ \cite{DomGuoNiedermeier2010} &
    $O(m^{3}e) = O(\Delta m^{4})$ &
    $O(n^{3}e)$ \\[3mm]
    Any & $O(\Delta^{3}m^{2}(\Delta m + n^{3}))$ &
    \multicolumn{2}{l}{
    $O(ne(n^{2} + e)) = O(\Delta mn(\Delta m + n^{2}))$}
  \end{tabular}
  \caption{Comparison of our results with the previous results.}
  \label{tab:comparison}
\end{table}

Note that without loss of generality we can assume that $M$ does not
contain any all-zero columns or rows, as such columns does not affect whether
the matrix has the C1P or the forbidden submatrices of $M$. It follows
that $\Delta m\ge n$. We will use this assumption throughout this
paper. Also note that the number of edges in $G(M)$ is the same as the
number of ones in $M$, which we denote as $e$. Note that $e = O(\Delta
m)$ and that $e\ge m,n$ (since we assume that there are no all-zero
columns or rows in $M$).

We will use the following auxiliary lemma.

\begin{lemma}\label{l:induced-matching-2}
  Given a bipartite graph $G$ with $e$ edges and partitions of size
  $m$ and $n$, picking an induced matching of size two of $G$ or
  determining that no such induced matching exists can be done in time
  $O(e + m + n)$.
\end{lemma}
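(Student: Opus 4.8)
The plan is to recognize that an induced matching of size two in a bipartite graph is exactly an induced copy of $2K_2$, and to reduce its existence to a nestedness (chain) condition on neighborhoods that can be tested in linear time. Two disjoint edges $\{u,v\},\{u',v'\}$ with $u,u'$ on the side of size $m$ and $v,v'$ on the side of size $n$ form an induced matching precisely when $\{u,v'\}$ and $\{u',v\}$ are non-edges, i.e.\ when $v\in N(u)\setminus N(u')$ and $v'\in N(u')\setminus N(u)$. Hence $G$ contains an induced matching of size two if and only if some two vertices $u,u'$ of the $m$-side have incomparable neighborhoods (neither $N(u)\subseteq N(u')$ nor $N(u')\subseteq N(u)$). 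Equivalently, $G$ has no such matching iff the neighborhoods of the $m$-side vertices form a chain under inclusion, and in that case sorting these vertices by degree yields a valid chain ordering, since equal-degree vertices must then have equal neighborhoods.

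First I would compute all degrees in $O(e)$ and sort the $m$-side vertices by non-decreasing degree using counting sort, which costs $O(m+n)$ because degrees lie in $\{0,\dots,n\}$; let $\mathrm{pos}(u)\in\{1,\dots,m\}$ denote the resulting rank. If the neighborhoods are nested, then along this order each $n$-side vertex $c$ is adjacent exactly to the suffix consisting of the top $\deg(c)$ ranks. I would therefore test the single condition
\[
\mathrm{pos}(u)\ \ge\ m-\deg(c)+1 \qquad\text{for every edge } \{u,c\},
\]
scanning all edges once in $O(e)$ time.

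The key step is to show that this threshold test is equivalent to nestedness. If it passes, then every neighbor of $c$ has rank in the size-$\deg(c)$ set $\{m-\deg(c)+1,\dots,m\}$; since $c$ has exactly $\deg(c)$ neighbors with distinct ranks, a pigeonhole count forces its neighborhood to be that whole suffix, which yields $N(u_{(i)})\subseteq N(u_{(j)})$ for all $i<j$, so no induced matching exists. If some edge $\{u,c\}$ violates the test, then among the $m-\mathrm{pos}(u)\ge\deg(c)$ vertices of rank greater than $\mathrm{pos}(u)$ at most $\deg(c)-1$ can be neighbors of $c$ (as $u$ is already one), so some such $u'$ satisfies $c\notin N(u')$; since $\deg(u')\ge\deg(u)$ while $c\in N(u)\setminus N(u')$, the inclusion $N(u')\subseteq N(u)$ fails, so some $c'\in N(u')\setminus N(u)$ exists, and $\{u,c\},\{u',c'\}$ is the desired induced matching. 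These witnesses are extracted once, by marking $N(c)$ and scanning the ranks above $\mathrm{pos}(u)$ to find $u'$, then marking $N(u)$ and scanning $N(u')$ to find $c'$, all within $O(e+m+n)$.

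Summing the phases gives the claimed $O(e+m+n)$ bound. I expect the main obstacle to be exactly this equivalence: deciding in linear time whether a family of $m$ subsets of an $n$-element ground set forms a chain, a task that naively requires $\Omega(m^2)$ pairwise neighborhood comparisons. The degree-based counting sort, combined with the suffix/threshold criterion and the pigeonhole argument that the threshold condition is both necessary and sufficient for nestedness, is what circumvents this quadratic barrier.
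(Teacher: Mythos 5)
Your proof is correct, and it shares the paper's skeleton --- reduce the induced matching of size two to incomparability of neighborhoods on one side, sort that side by degree with a counting sort, and exploit degree monotonicity to produce the second witness edge --- but your linear-time chain test is a genuinely different mechanism from the paper's. The paper simply checks the consecutive pairs in the sorted order: if $N(u_i)\setminus N(u_{i+1})\neq\emptyset$ for some $i$, then $\deg(u_i)\le\deg(u_{i+1})$ forces $N(u_{i+1})\setminus N(u_i)\neq\emptyset$ as well, and both matching edges come straight out of that one adjacent pair; if instead $N(u_i)\subseteq N(u_{i+1})$ for all $i$, transitivity chains the inclusions together and rules out any induced matching. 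You characterize chain-ness dually --- the neighborhoods form a chain iff every vertex $c$ on the opposite side sees exactly the top $\deg(c)$ ranks --- and verify this with a per-edge threshold test $\mathrm{pos}(u)\ge m-\deg(c)+1$ plus a pigeonhole count, recovering a witness $u'$ that need not be adjacent to $u$ in the sorted order. Both arguments are sound and run in $O(e+m+n)$, and both hinge on the same key observation that degree-sorting makes the quadratically many pairwise comparisons unnecessary; the paper's consecutive-pair scan is shorter and yields both witnesses immediately from one pair, while your suffix criterion requires proving the equivalence in both directions and a separate extraction scan, but in exchange it makes the structure of the negative case explicit (a nested, ``staircase'' biadjacency matrix) and tests it in a single pass over the edge list. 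That you sort the $m$-side where the paper sorts the $n$-side is immaterial by symmetry.
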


\begin{proof}
  Let $U$ be the partition of size $n$. Order vertices of $U$ by their
  degrees:
  $\deg (u_{1})\le \deg (u_{2})\le\dots \le \deg (u_{n})$. For every
  $i = 1,\dots,n - 1$, check if $N(u_{i})\setminus N(u_{i + 1})$ is
  non-empty. If for some $i$, $N(u_{i})\setminus N(u_{i + 1})\ne
  \emptyset $, then also $N(u_{i + 1})\setminus N(u_{i})\ne \emptyset$.
  In this case, we can pick any $a\in N(u_{i})\setminus N(u_{i + 1})$
  and any $b\in N(u_{i + 1})\setminus N(u_{i})$, and return
  $\{u_{i},a\}$ and $\{u_{i + 1},b\}$, as it forms an induced
  matching of $G$.

  Now, assume that for every $i$, $N(u_{i})\setminus N(u_{i + 1}) =
  \emptyset $, i.e., $N(u_{i})\subseteq N(u_{i + 1})$. We will show
  that there is no induced matching of $G$ of size two. Assume for
  contradiction that $\{u_{i},a\}$ and $\{u_{j},b\}$, where $i < j$,
  is such an induced matching. We have $N(u_{i})\subseteq N(u_{i +
    1})\subseteq\dots \subseteq N(u_{j})$, i.e., $a\in N(u_{j})$, a
  contradiction. Hence, in this case we can report that there is no
  such matching.

  Vertices of $U$ can be sorted by their degrees in time $O(n + m)$
  using a count sort. For each $i$, checking if $N(u_{i})\setminus N(u_{i +
    1})$ is non-empty can be done in time $O(\deg (u_{1}))$, hence,
  the total time spent on checking is $O(\sum_{i = 1}^{n - 1} \deg
  (u_{i})) = O(e)$.
\end{proof}

\section{Detection of smallest forbidden subgraphs for each type}
\label{sec:detect-small-forb-each}

We will present four algorithms which find a smallest subgraph of type
I, III, IV and V, respectively, each improving the complexity of the
best known such algorithm, cf.~\cite{BlinRizziVialette2012}. For type
II, we refer reader to the $O(ne^{3})$ algorithm\footnote{The authors of
  \cite{BlinRizziVialette2012} showed that the complexity of their
  algorithm is $O(\Delta^{4}m^{3})$, however, it is easy to check that
  their algorithm works in time $O(ne^{3})$.}
in~\cite{BlinRizziVialette2012}.

\subsection{Type I}
\label{sec:type-i}

Algorithm~\ref{alg:type1} finds a smallest forbidden subgraph of type I
in time $O(\Delta e^{2})$.

\begin{algorithm}
  \caption{Find a smallest $G_{\mathrm{I}_{k}}$ subgraph.}
  \label{alg:type1}
  \scriptsize
  \SetKwInOut{Input}{Input}
  \SetKwInOut{Output}{Output}
    \Input {$G(M)$}
    \Output {A smallest subgraph $G_{\mathrm{I}_{k}}$ of $G(M)$}
    \BlankLine
    \For{$w\in \R$}{
      \For{$x,y\in N(w)$}{
        construct the subgraph $G_{w,x,y}$ of $G(M)$ induced by
        vertices $(\R\setminus (N(x)\cap N(y)))\cup (\C\setminus
        N(w))\cup \{x,y\}$\;
        find a shortest path between $x$ and $y$ in $G_{w,x,y}$\;
        \If {the length of the path is smaller than any observed so
          far}{
          remember $w$ and the vertices of the path\;
        }
      }
    }
    \Return {subgraph of $G(M)$ induced by the remembered set of
      vertices  (if any)}
\end{algorithm}

\emph{Correctness of Algorithm~\ref{alg:type1}.} We are looking for
induced cycles of length 6 or more. For each black vertex $w$ and its
two neighbors $x,y$, we find a shortest induced cycle of length at
least 6. Such cycle cannot contain any vertex incident with $w$ other
than $x$ and $y$, and any vertex incident with both $x$ and $y$ other
than $w$. Hence, a shortest such cycle $c$ can be obtained from the a
shortest $x - y$ path $p$ in $G_{w,x,y}$ by adding two edges $\{x,w\}$
and $\{y,w\}$. This cycle cannot be of length $4$, otherwise $p$ would
contain a vertex in $N(x)\cap N(y)$. It remains to show that $c$ is
induced. Assume that there is a chord $\{u,v\}$ in $c$. Since $p$ does
not contain $N(w)\setminus \{x,y\}$, $u,v\ne w$. Hence, we could use
the chord as a shortcut to find a shorter cycle containing edges
$\{x,w\}$ and $\{y,w\}$, and hence, a shorter path between $x$ and $y$
in $G_{w,x,y}$, a contradiction.

\emph{Complexity of Algorithm~\ref{alg:type1}.} We will show that the
complexity of Algorithm~\ref{alg:type1} is $O(\Delta e^{2}) =
O(\Delta^{3}m^{2})$. The first loop executes $m$ times and the second
$\deg (w)^{2}$ times. Hence, the body of the second loop executes
$\sum_{w\in \R } \deg (w)^{2} = O(\Delta e)$ times. Constructing graph
$G_{w,x,y}$ takes time $O(e)$ and finding a shortest path in
$G_{w,x,y}$ can be done in time $O(e)$ using the Breadth-first search
algorithm.

\subsection{Type III}
\label{sec:type-iii}



Algorithm~\ref{alg:type3a} finds a smallest forbidden subgraph of type
II in time $O(e^{3})$.

\begin{algorithm}
  \caption{Find a smallest $G_{\mathrm{III}_{k}}$ subgraph.}
  \label{alg:type3a}
  \SetKwInOut{Input}{Input}
  \SetKwInOut{Output}{Output}
  \scriptsize
  \Input {$G(M)$}
  \Output {A smallest subgraph $G_{\mathrm{III}_{k}}$ of $G(M)$}
  \BlankLine
  \For{$\{x,w\}\in E_{M}$, where $x\in \C$ and $w\in \R$}{
    \For{$\{y,a\}$, where $y\in \C\setminus N(w)$ and $a\in \R \setminus N(x)$}{
      construct the subgraph $G_{x,w,y,a}$ of $G(M)$ induced by vertices
      $(\R\setminus N(x)\setminus N(y))\cup \{a\} \cup (N(w)\setminus
      \{x\})\cup (\C \setminus N(a))$\;
      find a shortest path between $a$ and set $\C\setminus
      N(w)\setminus N(a)$ in $G_{x,w,y,a}$\;\label{3a-path}
      \If{if the path exists and is shorter than any observed
        so far}{
        remember $w,x,y$ and the path\;
      }
    }
  }
  \Return {subgraph of $G(M)$ induced by the remembered set of
    vertices (if any)}
\end{algorithm}

\emph{Correctness of Algorithm~\ref{alg:type3a}.}  Let us first verify
that the vertices of a shortest path found in line~\ref{3a-path} and
$w,x,y$ induce a subgraph of type III. Obviously, $x$ is connected
only to $w$, $w$ is not connected to $y$ and the last vertex $z$ of
the path. On the other hand, $w$ must be connected to all other white
vertices on the path, since any such white vertex that is not in
$N(w)$ is in $\C \setminus N(a)$ and hence, also $\C \setminus
N(w)\setminus N(a)$, i.e., we would have a shorter path ending at this
vertex. Since the path is a shortest path, all black vertices on the
path are connected only to its predecessor and successor on the
path. In addition $a$ is connected to $y$ and no other black vertex on
the path is connected to $y$ since $G_{x,w,y,a}$ does not contain any
other neighbors of $y$. It follows that the vertices $w,x,y$ and the
vertices of a shortest path induce a subgraph of type III.

Second, consider a smallest subgraph of type III in $G(M)$. We will
show it is considered by the algorithm. Assume the algorithm is in the
cycle, where it picked edges $\{x,w\}$ and $\{y,a\}$ of this
subgraph. Then the rest of the vertices must lie in $G_{x,w,y,a}$: the
remaining black vertices are not connected to $x$ and $y$ and the
remaining white vertices are either in $N(w)\setminus \{x\}$ and $z$
is $\C \setminus N(a)$. These vertices together with $a$ must form a
shortest path from $a$ to $\C \setminus N(w)\setminus N(a)$ in
$G_{x,w,y,a}$, hence, Algorithm~\ref{alg:type3a} finds this subgraph
or a subgraph with the same number of vertices.

\emph{Complexity of Algorithm~\ref{alg:type3a}.} We will show that the
complexity of Algorithm~\ref{alg:type3a} is $O(e^{3}) =
O(\Delta^{3}m^{3})$. The first loop executes $e$ times. The second
loop executes $O(e)$ times. Constructing graph $G_{x,w,y,a}$ takes
time $O(e)$. Finding a shortest path in $G_{x}$ can be done in time
 O(e)  using a breadth-first search algorithm.

\subsection{Type IV}
\label{sec:type-iv}

Algorithm~\ref{alg:type4} determines if $G(M)$ contains a forbidden
subgraph of type IV in time $O(m^{3}e)$.

\begin{algorithm}
  \caption{Find a $G_{\mathrm{IV}}$ subgraph.}
  \label{alg:type4}
  \scriptsize
  \SetKwInOut{Input}{Input}
  \SetKwInOut{Output}{Output}
    \Input {$G(M)$}
    \Output {A subgraph $G_{\mathrm{IV}}$ of $G(M)$}
    \BlankLine
    \For{distinct $a,b,c,d\in R$}{
      find $UX = N(a)\setminus (N(b)\cup N(c))$\;
      find $VY = N(b)\setminus (N(a)\cup N(c))$\;
      find $WZ = N(c)\setminus (N(a)\cup N(b))$\;
      find $U = UX\cap N(d)$ and $X = UX\setminus N(d)$\;
      find $V = VY\cap N(d)$ and $Y = VY\setminus N(d)$\;
      find $W = WZ\cap N(d)$ and $Z = WZ\setminus N(d)$\;
      \If{each of the sets $X,Y,Z,U,V,W$ is non-empty}{
        pick any $x\in X,y\in Y,z\in Z,u\in U,v\in V,w\in W$\;
        \Return{$G(M)[a,b,c,d,x,y,z,u,v,w]$}
      }
    }
    \Return {not found}
\end{algorithm}

\emph{Correctness of Algorithm~\ref{alg:type4}.} It is easy to see
that once $a,b,c,d$ are picked, each of $x,y,z,u,v,w$ has to belong to
computed set $X,Y,Z,U,V,W$, respectively, and that once they are
picked from those sets, the returned vertices induce
$G_{\mathrm{IV}}$.

\emph{Complexity of Algorithm~\ref{alg:type4}.} We will show that the
complexity of Algorithm~\ref{alg:type4} is $O(m^{3}e) = O(\Delta
m^{4})$. The time complexity of the steps inside the loop depends on
degrees of nodes $a,b,c,d$, i.e., it is $O(\deg (a) + \deg (b) + \deg (c)
+ \deg (d))$. Hence, the overall complexity is $\sum_{a,b,c,d\in R}
O(\deg (a) + \deg (b)\deg (c) + \deg (d)) = 4\sum_{a,b,c,d\in R}
O(\deg (d)) = 4\sum_{a,b,c\in R} O(e) = m^{3}e$.

\subsection{Type V}
\label{sec:type-v}

Algorithm~\ref{alg:type5} determines if $G(M)$ contains a forbidden
subgraph of type V in time $O(m^{3}e)$.

\begin{algorithm}
  \caption{Find a $G_{\mathrm{V}}$ subgraph.}
  \label{alg:type5}
  \scriptsize
  \SetKwInOut{Input}{Input}
  \SetKwInOut{Output}{Output}
    \Input {$G(M)$}
    \Output {A subgraph $G_{\mathrm{V}}$ of $G(M)$}
    \BlankLine
    \For{distinct $a,b,c,d\in \R$}{
      find $UY = N(b)\cap N(d)\setminus N(c)$\;
      find $VZ = N(b)\cap N(c)\setminus N(d)$\;
      find $U = UY\cap N(a)$ and $Y = UY\setminus N(a)$\;
      find $V = VZ\cap N(a)$ and $Z = VZ\setminus N(a)$\;
      find $X = N(a)\setminus (N(b)\cup N(c)\cup N(d))$\;
      \If{each of the sets $X,Y,Z,U,V$ is non-empty}{
        pick any $x\in X,y\in Y,z\in Z,u\in U,v\in V$\;
        \Return{$G(M)[a,b,c,d,x,y,z,u,v]$}
      }
    }
    \Return {not found}
\end{algorithm}

\emph{Correctness of Algorithm~\ref{alg:type5}.} It is easy to see
that once $a,b,c,d$ are picked, each of $x,y,z,u,v$ has to belong to
computed set $X,Y,Z,U,V$, respectively, and that once they are picked
from those sets, the returned vertices induce $G_{\mathrm{V}}$.

\emph{Complexity of Algorithm~\ref{alg:type5}.} The complexity of
Algorithm~\ref{alg:type5} is $O(m^{3}e) = O(\Delta m^{4})$. This
follows by the same argument as for Algorithm~\ref{alg:type5}.

\section{Detection of a smallest forbidden subgraph}
\label{sec:detect-small-forb-all}

Overall, we will use Dom et al. (\cite{DomGuoNiedermeier2010})
approach to find the smallest forbidden subgraph in $G(M)$. We will
first find a shortest-paths (the sum of the lengths of the three paths)
white asteroidal triple $A$ in time $O(n^{2}e) = O(\Delta mn^{2})$
using the algorithm in \cite{DomGuoNiedermeier2010}.



A shortest-paths white asteroidal triple $A$
must be in $T$, but does not need to be a smallest forbidden
subgraph. Let $\ell $ be the sum of the lengths of the three paths of
$A$. If $A$ is of
\begin{itemize}
\item type I or II, then it contains $\ell $ vertices;
\item type III, it contains $\ell - 5$ vertices;
\item type IV, it contains $10 = \ell - 8$ vertices;
\item type V, it contains $9 = \ell - 1$ vertices.
\end{itemize}
It follows that if one of the smallest forbidden subgraphs is of type
I or II, then each shortest-paths asteroidal triple is of type I or II and
is a smallest forbidden subgraph. For the remaining cases, we need
to determine the smallest forbidden subgraphs of type III, IV and
V. However, we only need to find a smallest subgraph of type X if it is a
smallest forbidden subgraph. Hence, for types IV and V, if we find
during the search that there is a smaller forbidden subgraph of some other
type, we can stop searching for this type. For type III, since it has
a variable size, we cannot stop searching, however, we can abandon the
branch which would yield a larger or even the same size subgraph of
type III than we have observed. We will use this in what follows to
obtain faster algorithms for types III, IV and V than the ones
presented in the previous section.

\subsection{Type III}
\label{sec:type-iii-all}

Algorithm~\ref{alg:type3-all} guarantees to find a smallest subgraph
of type III \textbf{if} it is smaller than other types of forbidden
subgraphs in time $O(ne^{2})$. If there is a smaller subgraph of type
I or there is a smaller of same size subgraph of type V in $G(M)$, it
either reports that or it could report a subgraph of type III which is
not the smallest. It will first determine whether
$G_{\mathrm{III}_{1}}$ is a subgraph of $G(M)$. If not it continues to
the second phase, where it assumes that the smallest subgraph of type
III (if it exists) has at least 9 vertices.

\begin{algorithm}[H]
  \caption{Find a smallest $G_{\mathrm{III}_{k}}$ subgraph if it is
    smaller than other types of subgraphs.}
  \label{alg:type3-all}
  \scriptsize
  \SetKwInOut{Input}{Input}
  \SetKwInOut{Output}{Output}
  \Input {$G(M)$}
  \Output {A smallest subgraph $G_{\mathrm{III}_{k}}$ of $G(M)$ or
    report there is a subgraph of other type (I or V) of equal
    or smaller size}
  \BlankLine
  \For{$w\in \R $}{
    \For{$x,u\in N(w)$}{\label{3-all-ph1-b}
      construct the subgraph $G_{x,w,u}$ of $G(M)$ induced by vertices
      $N(u)\setminus N(x)\cup \C \setminus N(w)$\;
      find induced matching of size two using
      Lemma~\ref{l:induced-matching-2}\;
      \If{induced matching exists}{
        \Return subgraph of $G(M)$ induced by $x,w,u$ and the induced
        matching ($G_{\mathrm{III}_{1}}$)
      }\label{3-all-ph1-e}
    }
  }
  \tcc{We can now assume that there is no $G_{\mathrm{III}_{1}}$ in
    $G(M)$}
  set $i_{min} = \infty $\;
  \For{$\{x,w\}\in E_{M}$, where $x\in \C$ and $w\in \R$}{
    find $D = N_{2}(w)\setminus N(x)$ and $Y = N(D)\setminus N(w)$\;
    \For{$y\in$ Y}{\label{3-all-ph2-y-b}
      construct the subgraph $G_{x,w,y}$ of $G(M)$ induced by vertices
      $N(w)\setminus \{x\}\cup \{y\} \cup D$\;
      find $D_{i} = N_{i}(y)$ in $G_{x,w,y}$, for $i\ge 1$\;
      find $Y' = \{y'\in Y:\ D_{1}\setminus N(y')\ne \emptyset \} $
      and $D' = D\cap N(Y')$\;
      find smallest odd $i\ge 3$ such that $D_{i}\cap D'\ne \emptyset $
      (if possible)\;
      \If{found}{
        pick any $d_{i}\in D_{i}\cap D'$, any $y'\in Y'\cap N(d)$ \;
        find a path $P$ from $d_{i}$ to some $d_{1}\in D_{1}$ in
        $G_{w,w,y}$ of length $i - 1$\;
        \If{$\{y',d_{1}\} \notin E(M)$ and $i < i_{min}$}{
          set $i_{min}$ to $i$\;
          remember $x,w,y,y'$ and vertices of $P$\;
        }
      }
    }\label{3-all-ph2-y-e}
  }
  \eIf{$i_{min} = \infty $}{
    \Return subgraph of type III not found or there is a subgraph of
    type I or V of the size at most the size of the smallest type III subgraph
  }{
    \Return {subgraph of $G(M)$ induced by remembered set of vertices}
  }
\end{algorithm}

\emph{Correctness of Algorithm~\ref{alg:type3-all}.} It is easy to
check that the first phase of the algorithm finds
$G_{\mathrm{III}_{1}}$ subgraph if it exists in $G(M)$. Assume that
$G_{\mathrm{III}_{1}}$ is not an induced subgraph of $G(M)$., i.e.,
that a smallest subgraph of type III (if it exists) has at least 9
vertices. The algorithm continues to the second phase.

First, assume that $i$ is not found, i.e., for all odd $i\ge 3$,
$D_{i}\cap D' = \emptyset $. This implies that any path starting at
$y$ in $G_{x,w,y}$ cannot be extended with a white vertex $y'$ that is
not adjacent to $w$ and not adjacent to the second vertex $d_{1}\in
D_{1}$ of this path. Hence, the algorithm correctly continues with
examining another selection of vertices $x,w,y$. Assume that $i$ was
found. Now, assume that $G(M)$ does not contain edge
$\{y',d_{1}\}$. Let us verify that vertices $x,w,y,y'$ and the
vertices of $P$ induce $G_{\mathrm{III}_{(i - 1)/2}}$. It is clear
that $x$ is connected only to $w$ and $w$ only to white vertices on
$P$ except the first vertex $y$. By the construction, each vertex on
$P$ can be adjacent only to its predecessor or successor on $P$. Since
$i$ is the smallest odd integer larger than two such that $D_{i}\cap
D'\ne \emptyset $, $y'$ is not adjacent to any black vertex on the
path other than the last one. Hence, the vertices induce a subgraph of
type III. Finally, assume that $\{y',d_{1}\}\in E(M)$. If $i\ge 5$,
then vertices of $P$ without $y$ and $y'$ induce a cycle of length $i
+ 1$, i.e., a subgraph $G_{\mathrm{I}_{(i - 3)/2}}$, which is smaller
than a subgraph of type III we could get for this selection of $x,w,y$
(by choosing a different $d_{i}$, $y'$ or path $P$, or searching for
another odd $i$ such that $D_{i}\cap D'\ne \emptyset $). If $i = 3$,
consider $d_{1}'\in D_{1}$ that is not adjacent to $y'$ and let $P =
y,d_{1},u,d_{3}$. If $d_{1}'$ is adjacent to $u$, vertices
$x,w,u,d_{1}',y,d_{3},y'$ induce $G_{\mathrm{III}_{1}}$, a
contradiction. Hence, assume $\{d_{1}',u\} \notin E(M)$. Since
$d_{1}'\in D\subseteq N_{2}(w)$, there exists $u'\in N(w)$ adjacent to
$d_{1}'$. If $\{d_{1},u'\} \in E(M)$, then vertices
$x,w,u,u',d_{1},d_{1}',d_{3},y,y'$ induce $G_{\mathrm{V}}$. Otherwise,
vertices $w,u,d_{1},y,d_{1}',u'$ induce a cycle of length 6. In any
case, there exists a subgraph of other type of size equal or smaller
than it would be possible to find for this choice of $x,w,y$, hence,
the algorithm correctly moves to the next choice.

\emph{Complexity of Algorithm~\ref{alg:type3-all}.} We will show that
the complexity of Algorithm~\ref{alg:type3-all} is $O(ne^{2}) =
O(\Delta^{2}m^{2})$. The body of the loop in
lines~\ref{3-all-ph1-b}--\ref{3-all-ph1-e} will execute $O(\Delta e)$
times and each step of the body take $O(e)$ time. Hence, the
complexity of the first phase is $O(\Delta e^{2}) = O(ne^{2})$.
The main loop of the second phase will execute $O(e)$
times. Determining $D$ and $Y$ takes time $O(e)$. The nested loop in
lines~\ref{3-all-ph2-y-b}--\ref{3-all-ph2-y-e} will execute $O(n)$
times. Each step of the body of this loop will take time
$O(e)$. Hence, the complexity of the second phase is $O(ne^{2})$.

\subsection{Type IV}
\label{sec:type-iv-all}

Algorithm~\ref{alg:type4a-all} finds the subgraph $G_{\mathrm{IV}}$ in
time $O(n^{3}e)$, if it exists and if it is a smallest forbidden
subgraph. If there is a smaller forbidden subgraph of type I or III,
it might find an instance of $G_{\mathrm{IV}}$ or it might report
that there is a smaller forbidden subgraph instead.



\begin{algorithm}
  \caption{Find a $G_{\mathrm{IV}}$ subgraph or report that there is a
    smaller subgraph of type I or III.}
  \label{alg:type4a-all}
  \scriptsize
  \SetKwInOut{Input}{Input}
  \SetKwInOut{Output}{Output}
    \Input {$G(M)$}
    \Output {A subgraph $G_{\mathrm{IV}}$ of $G(M)$ or report that
      $G_{\mathrm{IV}}$ is not a smallest subgraph}
    \BlankLine
    \For{distinct $x,y,z\in \C$}{
      find $A = N(x)\setminus (N(y)\cup N(z))$\;
      find $B = N(y)\setminus (N(x)\cup N(z))$\;
      find $C = N(z)\setminus (N(x)\cup N(y))$\;
      find $D = \C\setminus (N(x)\cup N(y)\cup N(z))$\;
      find $U = N(A)\setminus \{x,y,z\}$\;
      find $V = N(B)\setminus \{x,y,z\}$\;
      find $W = N(C)\setminus \{x,y,z\}$\;
      \If{all sets $A,B,C,D,U,V,W$ are non-empty}{
        \For{$d\in D$}{
          \If{there exists distinct $u\in U\cap N(d)$, $v\in V\cap
            N(d)$ and $w\in W\cap N(d)$}{
            find $a\in A\cap N(u)$, $b\in B\cap N(v)$ and $c\in C\cap
            N(w)$\;
            \eIf{none of the edges
              $\{a,v\},\{a,w\},\{b,u\},\{b,w\},\{c,u\},\{c,v\}$
              exists}{
              \Return {$G(M)[x,y,z,u,v,w,a,b,c,d] = G_{\mathrm{IV}}$}
            }{
              \Return {there is a smaller subgraph of type I or III}
            }
          }
        }
      }
    }
    \Return {not found}
\end{algorithm}

\emph{Correctness of Algorithm~\ref{alg:type4a-all}.}
Correctness of the algorithm follows by the following lemma.

\begin{lemma}
  \label{l:cross-edges}
  Consider a subgraph $G'$ of $G(M)$ induced by vertices
  $x,y,z,u,v,w,a,b,c,d$ that contains edges
  \begin{equation*}
    \{x,a\} ,\{y,b\} ,\{z,c\} ,\{a,u\} ,\{b,v\} ,\{c,w\} ,\{u,d\} ,\{v,d\} ,\{w,d\} \,,
  \end{equation*}
  and does not contain edges
  \begin{equation*}
    \{x,d\} ,\{y,d\} ,\{z,d\} \,.
  \end{equation*}
  Then either $G'$ is an instance of $G_{\mathrm{IV}}$ or $G'$ contains
  either $G_{\mathrm{I}_{1}}$, $G_{\mathrm{III}_{1}}$ or
  $G_{\mathrm{III}_{2}}$ as an induced subgraph.
\end{lemma}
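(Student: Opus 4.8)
The plan is to analyze the structure of $G'$ by focusing on the ``cross edges'' — the six edges $\{a,v\},\{a,w\},\{b,u\},\{b,w\},\{c,u\},\{c,v\}$ whose absence is exactly the condition that distinguishes a clean $G_{\mathrm{IV}}$ from a degenerate configuration. By hypothesis, $G'$ already contains the nine ``spine'' edges and excludes the three edges $\{x,d\},\{y,d\},\{z,d\}$. So $G'$ is an instance of $G_{\mathrm{IV}}$ precisely when none of these six cross edges is present (and no other stray edge is present); the task is to show that whenever the configuration fails to be $G_{\mathrm{IV}}$, one of the three small obstructions $G_{\mathrm{I}_1}$, $G_{\mathrm{III}_1}$, $G_{\mathrm{III}_2}$ must appear. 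First I would pin down exactly which additional edges among the ten vertices are a priori possible: since $a\in N(x)\setminus(N(y)\cup N(z))$ and similarly for $b,c$, the vertices $a,b,c$ have no edges to the ``wrong'' white vertices among $x,y,z$; and $u,v,w\in N(d)$ with $a\in N(u)$, etc. This bookkeeping narrows the possible extra edges essentially to the six cross edges listed, plus possibly edges among $\{a,b,c\}$ to $\{u,v,w\}$ in the off-diagonal positions — which are the same six by another name.

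Next I would do a case analysis on which cross edges are present, exploiting the threefold symmetry of the configuration under cyclic permutation of the triples $(x,a,u),(y,b,v),(z,c,w)$. The key observation is that each individual cross edge, together with the spine, creates a short closed walk that one can read off as one of the forbidden subgraphs. For a concrete instance: if the single edge $\{b,u\}$ is present, then the vertices $a,u,b,y,\dots$ together with $x$ and the path back through $d$ should close up into a short induced cycle or into a $G_{\mathrm{III}}$ with a small tail. I expect that a single cross edge yields a $6$-cycle (that is, $G_{\mathrm{I}_1}$) or a $G_{\mathrm{III}_1}$, while certain pairs of cross edges force a $G_{\mathrm{III}_2}$. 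The plan is to show that in every case where at least one cross edge is present, we can exhibit an induced copy of one of the three named small obstructions on an explicitly chosen subset of the ten vertices, and that when no cross edge is present the ten vertices induce exactly $G_{\mathrm{IV}}$.

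The main obstacle I anticipate is the ``induced'' requirement: exhibiting a short cycle through some of the ten vertices is easy, but I must verify it has no chords, which depends on the absence of the \emph{other} cross edges. So the cases are not cleanly independent — the presence of one cross edge may prevent a given cycle from being induced because a second cross edge supplies a chord. I would handle this by choosing, among all present cross edges, a minimal or extremal one (for instance, the first present cross edge under the cyclic ordering) so that the cycle I build uses only spine edges plus that one chosen edge and is provably chordless given the specific neighborhood constraints. A careful accounting of which chords are ruled out by the set-membership definitions of $A,B,C,U,V,W,D$ (e.g.\ $a\notin N(y)$ kills a potential chord $\{a,y\}$) should close each case. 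Because the argument is symmetric and finite, once I handle a representative configuration for each equivalence class of cross-edge patterns (none present; exactly one present; two present in an ``adjacent'' pattern; two in an ``opposite'' pattern; three or more), the remaining cases follow by symmetry, and I can then conclude that $G'$ is either $G_{\mathrm{IV}}$ or contains one of $G_{\mathrm{I}_1}$, $G_{\mathrm{III}_1}$, $G_{\mathrm{III}_2}$ as claimed.
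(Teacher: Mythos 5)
Your proposal follows essentially the same route as the paper's proof: a finite case analysis on the six cross edges $\{a,v\},\{a,w\},\{b,u\},\{b,w\},\{c,u\},\{c,v\}$ organized by their number and intersection pattern (none present $\Rightarrow G_{\mathrm{IV}}$; a single edge, two edges meeting in $\{u,v,w\}$, or two disjoint edges $\Rightarrow G_{\mathrm{III}_{1}}$; two edges meeting in $\{a,b,c\}$ $\Rightarrow G_{\mathrm{III}_{2}}$; three pairwise disjoint edges $\Rightarrow$ a chordless $C_{6}$, i.e., $G_{\mathrm{I}_{1}}$), exploiting the same cyclic symmetry of the triples $(x,a,u),(y,b,v),(z,c,w)$ that you invoke. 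The chordlessness difficulty you anticipate is real and is precisely what the paper handles by choosing the witness vertex set so that every remaining cross edge has an endpoint outside it (in effect prioritizing pairs that meet on the $\{u,v,w\}$ side), so your extremal-choice remedy matches the paper's resolution.
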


\begin{proof}
  We will use the following two partial maps: $R(x) = a$, $R(y) = b$,
  $R(z) = c$, $R(a) = u$, $R(b) = v$ and $R(c) = w$, and $L = R^{-1}$.

  If none of the edges in
  $E' = \{\{a,v\},\{a,w\},\{b,u\},\{b,w\},\{c,u\},\{c,v\}\}$ is present, then
  $G'$ is isomorphic to $G_{\mathrm{IV}}$.

  If exactly one edge $e$ in $E'$ is present, we have an induced
  subgraph $G_{\mathrm{III}_{1}}$ centered at the vertex $r = e\cap
  \{u,v,w\}$. In particular, vertices $d,r,L(r),L(L(r)),\ell ,L(\ell
  ),z$, where $\ell = e \cap \{a,b,c\}$ and $z\in \{u,v,w\} \setminus
  \{r,R(\ell )\}$, induce $G_{\mathrm{III}_{1}}$.

  We can assume that there are at least two edges in $E'$ present. We
  will distinguish two cases. Either (i) there exists two edges $e$ and
  $e'$ in $E'$ present such that $e\cap e'\ne \emptyset $, or (ii) for
  each pair of such edges $e\cap e' = \emptyset $.

  First, consider case (i) and let $e,e'$ be such that $e\cap e'\ne \emptyset
  $. Depending on whether the intersection lies in $\{a,b,c\}$ or
  $\{u,v,w\}$, we have two cases:
  \begin{enumerate}
  \item $e\cap e'\in \{a,b,c\}$ (``edges joing on the left''), then
    vertices $V(G')\setminus \{e\cap e'\}$ induce
    $G_{\mathrm{III}_{2}}$;
  \item $e\cap e'\in \{u,v,w\}$ (``edges joing on the right''), then
    vertices $x,y,z,a,b,c,e\cap e'$ induce $G_{\mathrm{III}_{1}}$.
  \end{enumerate}

  Now, consider case (ii). Note the number of edges in $E'$ present is
  at most three. We will consider two cases depending on the number of
  such edges:
  \begin{enumerate}
  \item $|E'\cap E(G')| = 2$: Without loss of generality we can assume
    that $e\cap \{a,b,c\} = L(e'\cap \{u,v,w\})$ for $e,e'\in E'$
    present in $G'$. Then the same collection of vertices as in the
    case of one edge $e$ induces $G_{\mathrm{III}_{1}}$, since one end
    of $e'$ lies outside of this collection.
  \item $|E'\cap E(G')| = 3$: Then the vertices $a,b,c,u,v,w$ induce
    $C_{6}$, i.e., $G_{\mathrm{I}_{1}}$.
  \end{enumerate}
\end{proof}

\emph{Complexity of Algorithm~\ref{alg:type4a-all}.}
We will show that the complexity of Algorithm~\ref{alg:type4a-all} is
$O(n^{3}e) = O(\Delta mn^{3})$. The first loop executes
$O(n^{3})$ times, determining $A,B,C,D$ takes time $O(m)$, determining
sets $U,V,W$ time $O(e)$. The loop for $d\in D$ is executed $O(m)$
times and each execution takes time $O(deg(d))$, i.e., the total time
spent in this loop is $\sum_{d\in D} O(\deg (d)) = O(e)$.

\subsection{Type V}
\label{sec:type-v-all}

Algorithm~\ref{alg:type5a-all} find the subgraph $G_{\mathrm{V}}$ in
time $O(n^{3}e)$, if it exists and if it is a smallest forbidden
subgraph. If there is a smaller forbidden subgraph of type I or
III, it might find an instance of $G_{\mathrm{V}}$ or it might
report that there is a smaller forbidden subgraph instead.



\begin{algorithm}
  \caption{Find a $G_{\mathrm{V}}$ subgraph or report that there is a
    smaller subgraph of type I or III.}
  \label{alg:type5a-all}
  \scriptsize
  \SetKwInOut{Input}{Input}
  \SetKwInOut{Output}{Output}
    \Input {$G(M)$}
    \Output {A subgraph $G_{\mathrm{V}}$ of $G(M)$ or report that
      $G_{\mathrm{V}}$ is not a smallest subgraph}
    \BlankLine
    \For{distinct $x,y,z\in \C$}{
      find $A = N(x)\setminus (N(y)\cup N(z))$\;
      find $B = N(y)\setminus (N(x)\cup N(z))$\;
      find $C = N(z)\setminus (N(x)\cup N(y))$\;
      find $D = (N(y)\cap N(z))\setminus N(x)$\;
      pick any $u\in N(A)\cap N(B)\cap N(D)$ if possible\;
      pick any $v\in N(A)\cap N(C)\cap N(D)$ if possible\;
      \If{$u$ and $v$ has been picked}{
        \If{$u\in N(C)$ or $v\in N(B)$}{
          \Return there is a smaller subgraph of type III  ($G_{\mathrm{III}_{1}}$)\label{5a-all-III}
        }
        find $A' = A\cap N(u)\cap N(v)$ and $D' = D\cap N(u)\cap N(v)$\;
        \If{$A' = \emptyset $ or $D' = \emptyset $}{
          \Return there is a smaller subgraph of type I
          ($G_{\mathrm{I}_{1}}$ or $G_{\mathrm{I}_{2}}$)\label{5a-all-I}
        }
        pick any $a\in A'$, $b\in B\cap N(u)$, $c\in C\cap N(v)$ and
        $d\in D'$\;
        \Return $G(M)[x,y,z,u,v,a,b,c,d]$\label{5a-all-V}
      }
    }
    \Return {not found}
\end{algorithm}

\emph{Correctness of Algorithm~\ref{alg:type5a-all}.}
The algorithm is able
to reduce time complexity by avoiding trying all possible choices for
$u,v$ and $a,b,c,d$, but rather picking one choice (if
possible), and then either finding $G_{\mathrm{V}}$ or a smaller
forbidden subgraph. Let us verify that decisions algorithm makes are
correct:
\begin{itemize}
\item
  First, assume that the algorithm stops in
  line~\ref{5a-all-III}. Then there exists $w\in N(A)\cap N(B)\cap
  N(C)\cap N(D)$ (either $u$ or $v$). Then there exists $a\in A\cap
  N(w)$, $b\in B\cap N(w)$ and $c\in C\cap N(w)$. Vertices
  $x,y,z,a,b,c,w$ induce $G_{\mathrm{III}_{1}}$.
\item Assume that the algorithm stops in line~\ref{5a-all-I}. If $A' =
  \emptyset $ and $D' = \emptyset $, there exists $a\in A\cap N(u)$,
  $a'\in A\cap N(v)$, $d\in D\cap N(u)$ and $d'\in D\cap N(v)$. Note
  that $a\ne a'$, $d\ne d'$, $a,d\notin N(v)$ and $a',d'\not in$
  N(u). It is easy to check that vertices $x,a,u,d,y,d',v,a'$ induce
  $C_{8}$. Similarly, if either $A' = \emptyset $ or $D' = \emptyset
  $, we can find vertices that induce $C_{6}$.
\item
  Finally, it is easy to check that if the algorithm outputs an
  induced subgraph in line~\ref{5a-all-V}, it is $G_{\mathrm{V}}$.
\end{itemize}
On the other hand, if $G_{\mathrm{V}}$ is a smallest forbidden
subgraph of $G(M)$, then the algorithm cannot finish in
lines~\ref{5a-all-III} and~\ref{5a-all-I}, and
hence, it will eventually output $G_{\mathrm{IV}}$ in
line~\ref{5a-all-V}.

\emph{Complexity of Algorithm~\ref{alg:type5a-all}.}  We will show
that the complexity of Algorithm~\ref{alg:type5a-all} is $O(n^{3}e) =
O(\Delta mn^{3})$. The first loop executes $O(n^{3})$ times,
determining $A,B,C,D$ takes time $O(m)$, picking $u,v$ time
$O(e)$, picking $a,b,c,d$ time $O(m)$. Hence, the total time
used by the algorithm is $O(n^{3}(O(m) + O(e))) = O(n^{3}e)$.

\subsection{Main algorithm}
\label{sec:main-algorithm}

Algorithm~\ref{alg:Tucker} finds a smallest forbidden subgraph
using the three algorithms described above.

\begin{algorithm}
  \caption{Find a smallest forbidden Tucker subgraph.}
  \label{alg:Tucker}
  \scriptsize
  \SetKwInOut{Input}{Input}
  \SetKwInOut{Output}{Output}
    \Input {$G(M)$}
    \Output {A smallest forbidden subgraph of $G(M)$}
    \BlankLine
    find a smallest white asteroidal triple $A$ using
    Lemma~\ref{l:asteroidal}\;
    let $\ell $ be the sum of the lengths of three paths of $A$\;
    find a smallest subgraph of types III, IV and V (using the
    procedures described above)\;
    let $s_{\mathrm{III}},s_{\mathrm{IV}},s_{\mathrm{V}}$ be
    the sizes of these subgraphs (or $\infty $ if not found), respectively\;
    \eIf {$\ell = \min\{\ell ,s_{\mathrm{III}},s_{\mathrm{IV}},s_{\mathrm{V}}\}$}
    {\Return {$A$}}
    {let $s_{X} =
      \min\{\ell ,s_{\mathrm{III}},s_{\mathrm{IV}},s_{\mathrm{V}}\}$\;
    \Return {the smallest subgraph of type $X$}}
\end{algorithm}

To verify the correctness of Algorithm~\ref{alg:Tucker}, first
consider that one of the smallest forbidden subgraphs of $G(M)$ is of
type I or II. By the above argument, asteroidal triple $A$ is of type
I or II with size $\ell $, and since it is a smallest forbidden
subgraph, we have $\ell =
\min\{\ell,s_{\mathrm{III}},s_{\mathrm{IV}},s_{\mathrm{V}}\}$. Hence,
the algorithm correctly outputs one of the smallest forbidden
subgraphs. Second, assume that all smallest forbidden subgraphs of
$G(M)$ are of type III, IV and V. Let $s =
\min\{s_{\mathrm{III}},s_{\mathrm{IV}},s_{\mathrm{V}}\}$. If $A$ is of
type I or II, then the size of $A$ is $\ell $, and hence, $\ell > s$
and $s_{X} =
\min\{\ell,s_{\mathrm{III}},s_{\mathrm{IV}},s_{\mathrm{V}}\}$. If $A$
is of type III, IV or V, then $\ell \ge s + 1$, and hence again $s_{X}
= \min\{\ell,s_{\mathrm{III}},s_{\mathrm{IV}},s_{\mathrm{V}}\}$. It
follows that Algorithm~\ref{alg:Tucker} correctly outputs one of the
smallest forbidden subgraphs.

It follows from Algorithm~\ref{alg:Tucker} that we do not need a
special detection algorithms for type I and II forbidden
subgraphs. However, in some applications, there might be a need to
determine a smallest forbidden subgraph of each type. Therefore, we
present such algorithms for these two types of forbidden subgraphs as
well.


\end{document}